\documentclass[conference]{IEEEtran}
\IEEEoverridecommandlockouts

\usepackage{cite}

\usepackage{amsmath,amssymb,amsfonts}
\usepackage{mathtools}
\usepackage{amsthm}
\newtheorem{lemma}{Lemma}

\usepackage{graphicx}
\usepackage{xcolor}
\usepackage{booktabs}
\usepackage{longtable}
\usepackage{array}
\usepackage{makecell}
\usepackage{pdflscape}
\usepackage{pgfplots}
\pgfplotsset{compat=1.18}

\usepackage{algorithm}
\usepackage{algpseudocode}

\usepackage{listings}
\usepackage{pifont}
\usepackage{textcomp}
\usepackage{multicol}
\usepackage{ifthen}
\usepackage{soul}
\usepackage{comment}

\usepackage{balance}
\usepackage[colorlinks=true, citecolor=blue]{hyperref}
\def\BibTeX{{\rm B\kern-.05em{\sc i\kern-.025em b}\kern-.08em
    T\kern-.1667em\lower.7ex\hbox{E}\kern-.125emX}}
\begin{document}

\title{Neural Stringology Based Cryptanalysis of EChaCha20
\thanks{This paper has been accepted for publication in "The 2nd International Conference on Sustainability, Innovation, and Society (ICSIS 2026)", Valencia, Spain}
}
\author{ 
\IEEEauthorblockN{Victor Kebande\IEEEauthorrefmark{1}\IEEEauthorrefmark{2}}
\IEEEauthorblockA{\IEEEauthorrefmark{1}University of Colorado Denver, CO USA\\
\IEEEauthorrefmark{2}ATLAS Institute, University of Colorado 
Boulder, Colorado, USA\\
Email: victor.kebande@ucdenver.edu, victor.kebande@colorado.edu}
} 

\maketitle

\begin{abstract}

Modern stream ciphers rely on strong diffusion and Pseudorandom Keystream Generation (PKG) to resist cryptanalysis. While conventional evaluation methods such as statistical randomness tests and differential analysis provide important security assurances, they may fail to detect localized structural patterns embedded within cipher outputs. In this paper, a Neural Stringology Cryptanalysis (NSC) framework that combines classical string pattern analysis with Machine Learning techniques to investigate potential structural anomalies in stream cipher keystreams is introduced. The proposed approach first applies stringology-inspired feature extraction methods like $m$-gram frequency analysis, substring recurrence detection, and positional pattern statistics aligned with the internal operations of Add Rotate XOR (ARX)-based stream ciphers. These extracted features are then analyzed using a neural learning model to identify deviations from expected random behavior and to detect subtle structural patterns that may not be captured by traditional statistical tests. Experimental evaluation is conducted on keystream outputs generated by the \textit{EChaCha20} stream cipher under multiple configurations, including reduced-round variants. The results demonstrate that the proposed NSC framework can identify distinguishable structural characteristics in the keystream data under controlled conditions, suggesting that integrating machine learning with stringology-based analysis provides a promising complementary methodology for evaluating the structural robustness of modern ARX-based stream cipher designs.

\end{abstract}

\begin{IEEEkeywords}
Neural, Stringology, Cryptanalysis, EchaCha20
\end{IEEEkeywords}

\section{Introduction}

Modern stream ciphers play a critical role in securing contemporary digital communication systems, including network protocols, infrastructures, and secure messaging platforms \cite{kuznetsov2024high}. Cipher designs like the ChaCha family  have gained widespread adoption because of their strong diffusion properties, their resistance to timing attacks and high performance across diverse hardware environments according to  \cite{bernstein2008chacha}. In particular, due to the Add-Rotate-XOR (ARX) based constructions, \cite{song2020secure}, they have become attractive due to their simplicity and efficiency while maintaining strong cryptographic security guarantees. For example, the  improved constructions like the \textit{EChaCha20} stream cipher extends the ChaCha design by modifying its Internal State $(S_t)$ structure and transformation mechanisms in order to improve diffusion and strengthen resistance against structural cryptanalysis \cite{kebande2023extended}.

The process of evaluating the security of stream ciphers typically relies on a combination of statistical randomness tests, differential cryptanalysis, and algebraic analysis. Also, standard evaluation suites such as the NIST Statistical Test Suite (STS), TestU01 \cite{sleem2020testu01}, PractRand are also commonly used to assess the statistical quality of generated keystreams \cite{sleem2020testu01, zubkov2019testing}. It has been seen from literature that these approaches provide important insights into global randomness properties. However, these standards   are not always designed to detect localized structural irregularities that may arise from internal cipher operations. 


In the recent past, stringology-based approaches have emerged as a promising complementary technique for analyzing sequential data. Stringology is the study of efficient pattern matching and substring analysis algorithms \cite{watson2012correctness}. It has traditionally been applied in domains such as text processing, bioinformatics, and data mining. Algorithms such as Knuth–Morris–Pratt (KMP) \cite{shapira2006adapting} and Boyer–Moore (BM) \cite{hyyro2004boyer} enable efficient detection of recurring patterns and structural repetitions in large datasets \cite{watson2003boyer}. When applied to cryptographic outputs, these techniques provide a novel perspective by treating keystream sequences as structured strings and searching for repeated patterns, substring correlations, and structural artifacts that may indicate deviations from ideal randomness.

Despite the effectiveness of classical stringology methods for pattern detection, identifying complex structural relationships in large-scale keystream datasets can remain challenging. Machine Learning (ML) techniques have recently demonstrated strong capabilities in discovering hidden patterns in high-dimensional data \cite{verleysen2003learning}. Integrating ML with string-based pattern analysis therefore offers a new opportunity to enhance cryptanalytic evaluation methods. 


In this paper, we introduce a \textit{Neural Stringology Cryptanalysis} (NSC) framework that combines classical string pattern analysis with machine learning techniques to investigate structural properties of stream cipher keystreams. The proposed approach first extracts stringology-inspired features from keystream outputs, including $m$-gram frequency distributions, substring recurrence patterns, and positional pattern statistics aligned with the internal operations of ARX-based stream ciphers. These features are subsequently analyzed using a neural learning model to identify deviations from expected random behavior and to evaluate the structural robustness of cipher outputs.

To evaluate the effectiveness of the NSC framework, experiments are conducted on keystream datasets generated by the \textit{EChaCha20} stream cipher under multiple configurations, including reduced-round variants. The experimental results have demonstrated that the integration of ML with stringology-based feature extraction can assist in identifying distinguishable structural characteristics in keystream outputs under controlled experimental conditions. These findings suggest that neural stringology analysis can serve as a complementary tool for evaluating the structural robustness of modern ARX-based stream cipher constructions alongside traditional statistical and differential cryptanalysis techniques.

The remainder of this paper is organized as follows. Section II  presents background on stream ciphers, ARX constructions, and stringology-based analysis methods. Section III introduces the related work while Section IV and V formulates the problem and presents the threat model respectively. This is followed by the NSC framework in Section VI. Experiments, results and analysis are discussed in Section VII and VIII respectively. After this, a discussion is given in Section IX followed by a conclusion and a mention of future work in Section X.


\section{Background}

\subsection{Stream Ciphers Construction }

Stream ciphers are symmetric encryption algorithms that generate a pseudorandom keystream which is combined with plaintext using simple operations such as XOR to produce ciphertext \cite{jiao2020stream}. Their efficiency and low computational overhead make them suitable for high-speed communication systems, including network protocols and Internet-of-Things (IoT) devices. A secure stream cipher must produce a keystream that is computationally indistinguishable from a truly random sequence, ensuring confidentiality even when large amounts of data are encrypted as is shown in Figure \ref{fig:exampless}.

\begin{figure}[h]
\centering
\includegraphics[width=0.8\linewidth]{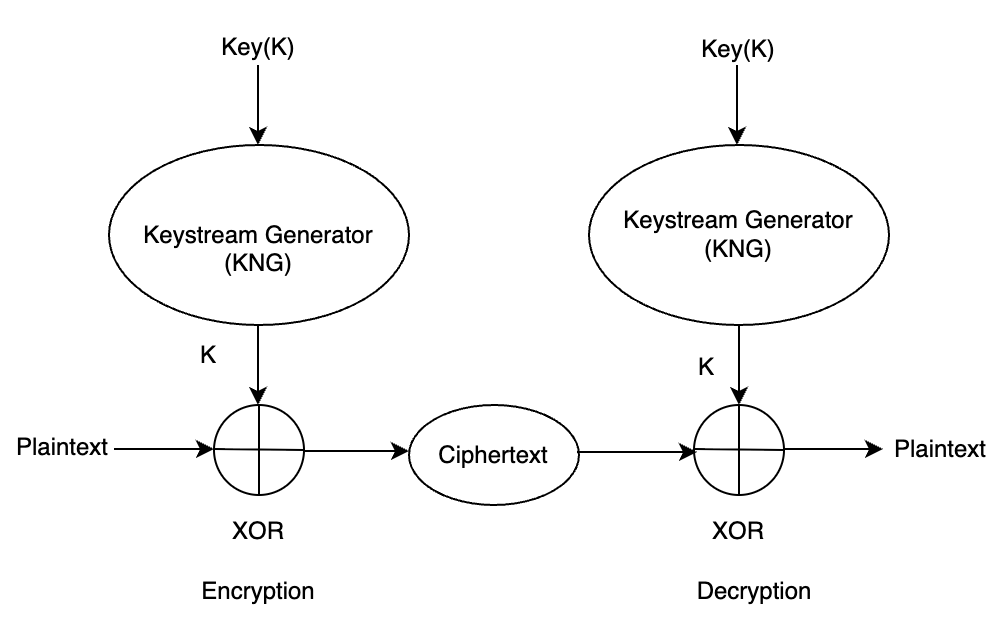}
\caption{Steam Cipher}
\label{fig:exampless}
\end{figure}

Many modern stream ciphers employ \textit{Add–Rotate–XOR (ARX)} constructions, which rely on the combination of modular addition, bitwise rotation, and XOR operations \cite{perrin2024addition}. ARX designs are attractive because they avoid complex substitution tables and can be efficiently implemented on general-purpose processors. Well-known examples include the ChaCha family of stream ciphers \cite{bernstein2008chacha}, where diffusion and nonlinearity are achieved through repeated ARX transformations applied to internal state matrices. Due to their structured word-level operations, ARX-based ciphers are often analyzed for potential rotational patterns or structural artifacts in their keystream outputs.

\subsection{ EchaCha20}
\textit{EChaCha20} is a stream cipher derived from the ChaCha family of ARX-based constructions \cite{kebande2023extended}. Similar to ChaCha20 \cite{bernstein2008chacha}, the cipher generates a pseudorandom keystream which is combined with plaintext using XOR operations to produce ciphertext. The internal design relies on Add–Rotate–XOR (ARX) transformations applied to a structured internal state composed of multiple word-level registers. These operations are repeated through several rounds in order to achieve strong diffusion and nonlinearity across the internal state. It enhances the quarter-round function (QR-F) and incorporates 32-bit input words along with ARX operations. The internal state of \textit{EChaCha20} is initialized using a secret key $K$, a nonce $N$, and a block counter.

The fundamental transformation used in ChaCha-based constructions,
including \textit{EChaCha20}, is the \textit{quarter-round} function. This
operation updates four 32-bit state words using a sequence of
Add–Rotate–XOR (ARX) operations. Let the four state words be
denoted by $(a,b,c,d)$ \cite{kebande2023extended}. The quarter-round transformation is defined
as follows:

Taking \textit{EChaCha20} input for $x=(x_0,x_1,x_2,x_3,x_4,x_5)$, then the
$QR-Fs$ will be given by $y=(y_0,y_1,y_2,y_3,y_4,y_5)$, where $x$
and $y$ are six-word input and output, respectively. Therefore, the
extended $QR-Fs$ is given as shown in the word computation:

$z_1 = y_1 \oplus ((y_0 + y_5)\ll 2)$,
$z_2 = y_2 \oplus ((z_1 + y_3)\ll 4)$,
$z_3 = y_3 \oplus ((z_2 + y_4)\ll 7)$,
$z_4 = y_4 \oplus ((z_3 + y_0)\ll 8)$,
$z_5 = y_5 \oplus ((z_4 + z_1)\ll 12)$,
$z_0 = y_0 \oplus ((z_5 + z_2)\ll 16)$.

where $+$ denotes addition modulo $2^{32}$, $\oplus$ denotes the
bitwise XOR operation, and $\lll r$ represents a left rotation by
$r$ bits \cite{kebande2023extended}.


Because \textit{EChaCha20} relies on deterministic word-level transformations, the generated keystream can be viewed as a sequential symbolic string whose structural properties may reflect characteristics of the underlying cipher operations. This makes \textit{EChaCha20} a suitable candidate for evaluating the proposed neural stringology cryptanalysis framework, which analyzes keystream outputs using pattern-based feature extraction and machine learning techniques in order to detect subtle structural signals.

\subsection{Stringology and Pattern Matching}

Stringology is the study of algorithms and data structures designed for efficient processing and analysis of strings \cite{watson2012correctness}. Classical string matching techniques have been widely used in applications such as text processing, bioinformatics, and data mining, where identifying repeated patterns and structural similarities within sequences is essential. When applied to cryptographic analysis, keystream outputs can be interpreted as long sequences of symbols, allowing pattern detection techniques to reveal potential irregularities or recurring structures.

Two well-known pattern matching algorithms are the KMP and BM algorithms \cite{shapira2006adapting, watson2003boyer}. The KMP algorithm performs efficient substring searches by precomputing prefix information that avoids redundant comparisons, while the BM algorithm accelerates searches by using heuristic skipping rules based on character mismatches. These algorithms enable efficient detection of repeated substrings and structural correlations in large datasets. In the context of cryptographic evaluation, such techniques can be used to examine keystream sequences for pattern frequencies, substring recurrences, and other indicators of non-random behavior.

\subsection{Machine Learning in Cryptanalysis}

Machine Learning (ML) techniques have recently attracted attention as tools for analyzing complex patterns in security-related datasets \cite{liu2018survey}. In cryptanalysis, ML models can be trained to detect subtle statistical relationships that may not be easily observable using traditional analytical methods \cite{verleysen2003learning}. Applications  include neural distinguishers, side-channel analysis, and anomaly detection in cryptographic outputs.

A common approach involves extracting descriptive features from cryptographic data and using ML algorithms to identify deviations from expected random behavior \cite{barbosa2016machine}. By learning relationships among these features, the models may assist in identifying structural patterns or biases present in cipher outputs. Integrating ML with traditional analytical techniques therefore offers a complementary perspective for evaluating the robustness of modern cryptographic constructions.

\section{Related Work}

Research on stream cipher evaluation has traditionally relied on statistical randomness testing, differential cryptanalysis, and algebraic analysis. Standard test suites such as NIST SP 800-22, TestU01, and PractRand are widely used to assess the pseudorandomness of keystream outputs \cite{sleem2020testu01, sys2014faster}. These approaches are valuable for detecting global statistical weaknesses; however, they are generally less suited for identifying localized structural patterns that may arise from word-level operations in ARX-based ciphers \cite{barbero2024overview}. This limitation is particularly relevant when analyzing modern designs whose security depends on subtle diffusion and rotation behavior.

Within the family of ARX stream ciphers \cite{barbero2024overview}, ChaCha20 \cite{bernstein2008chacha} and its variants have received significant attention due to their efficiency and strong practical security properties. Prior studies on ChaCha-style constructions have investigated diffusion, avalanche behavior, and resistance to rotational cryptanalysis, with particular focus on how rotation constants and round structure influence differential propagation \cite{bernstein2008chacha,barbero2022rotational}. In the case of EChaCha, existing evaluation has largely centered on statistical randomness tests and structural analysis of its expanded state and modified quarter-round function, showing improved diffusion and strong resistance within the tested bounds \cite{kebande2023extended}. Nevertheless, these studies do not explicitly integrate data-driven learning mechanisms into the analysis process. 

Pattern matching and string processing methods have long been used in text algorithms \cite{watson2012correctness}, bioinformatics, and secure data processing. Classical algorithms such as Knuth--Morris--Pratt and Boyer--Moore provide efficient mechanisms for detecting repeated substrings, structural recurrences, and alignment patterns in large sequential datasets \cite{kourie2011compile, shapira2006adapting, hyyro2004boyer}. Recent stringology-inspired studies suggest that such methods can be adapted to security-oriented contexts, but their use in cryptanalysis remains relatively limited \cite{kebande2026stringologybasedcryptanalysisechacha20stream}. In particular, prior work has not fully explored how stringology-based feature extraction can be combined with modern learning techniques for stream cipher evaluation. 

In parallel, machine learning has increasingly been applied in cryptographic research, particularly in areas such as side-channel analysis, neural distinguishers, and anomaly detection. Learning-based methods have shown promise in identifying subtle patterns that may be difficult to capture using handcrafted analytical rules alone. Unsupervised approaches, including clustering and autoencoders, have also been suggested as useful tools for discovering rare or non-obvious structures in cryptographic data. However, much of this work focuses either on implementation leakage or generic statistical classification, rather than on stringology-guided structural analysis of ARX keystreams. 

This work differs from prior work in two ways. First, it builds on stringology-based cryptanalysis by treating keystream evaluation as a structured pattern-search problem aligned with the internal word-level behavior of ARX stream ciphers. Second, it extends that framework by incorporating machine learning as an analysis layer over stringology-derived features. In this way, the proposed approach connects exact pattern matching with learning-based structural detection, providing a complementary methodology for evaluating \textit{EChaCha20} and related stream cipher constructions.

\section{Problem Formulation}

In the modern stream cipher evaluation, the primary security objective is that the keystream generated by the cipher should be computationally indistinguishable from a truly random sequence \cite{berbain2006quad}. Let $K \in \{0,1\}^{\kappa}$ denote the secret key and $N \in \{0,1\}^{\nu}$ denote the nonce used to initialize the cipher. The keystream generator of the cipher can be modeled as a deterministic function as is shown in Eq 1.

\begin{equation}
\mathcal{G}(K,N) \rightarrow S
\label{eq:keystreamgen}
\end{equation}

where $S = s_1 s_2 \ldots s_n \in \{0,1\}^n$ represents the output keystream of length $n$. In a secure stream cipher, the distribution of $S$ should be indistinguishable from the uniform distribution $U_n$ over $\{0,1\}^n$.

Traditional cryptanalytic evaluation methods rely on statistical randomness tests or differential analysis to assess whether the output sequence deviates from ideal random behavior. However, these techniques often evaluate aggregate statistical properties and may fail to capture localized structural characteristics that arise from internal cipher transformations. In ARX-based constructions such as \textit{EChaCha20} \cite{kebande2023extended}, the keystream is generated through repeated applications of modular addition, bit rotations, and XOR operations over word-level state matrices. These transformations may introduce subtle structural correlations within the generated sequence that are difficult to detect using conventional statistical tests.

To analyze such structural characteristics, this work considers the keystream sequence $S$ as a symbolic string that can be analyzed using stringology-based pattern detection techniques. Let $P \in \{0,1\}^{m}$ denote a substring pattern of length $m$. The number of occurrences of pattern $P$ in the keystream $S$ can be expressed as is shown in Eq 2.

\begin{multline}
Adv_{\mathcal{A}} =
\left|
Pr[\mathcal{A}(S)=1 \mid S \leftarrow \mathcal{G}(K,N)]
\right. \\
\left.
- Pr[\mathcal{A}(S)=1 \mid S \leftarrow U_n]
\right|
\end{multline}

For an ideal random sequence, the expected probability of observing pattern $P$ is approximately $2^{-m}$. Significant deviations from this probability may indicate the presence of structural patterns or correlations in the keystream output.

The central problem addressed in this work is to determine whether such structural patterns can be systematically detected through a combination of stringology-based feature extraction and machine learning analysis. In particular, we investigate whether features derived from substring frequency distributions, recurrence patterns, and positional correlations can reveal measurable differences between cipher-generated keystream sequences and truly random sequences.

Formally, consider an adversary $\mathcal{A}$ that is given access to an oracle $\mathcal{O}$ which outputs a sequence $S$ drawn either from the keystream generator $\mathcal{G}(K,N)$ or from the uniform distribution $U_n$. The goal of the adversary is to determine the origin of the observed sequence. The distinguishing advantage of the adversary is defined as is shown in Eq 3.

\begin{equation}
Adv_{\mathcal{A}} =
\left|
\Pr[\mathcal{A}(S_c)=1] -
\Pr[\mathcal{A}(S_r)=1]
\right|
\end{equation}

If the adversary can achieve a non-negligible advantage in distinguishing the two distributions using features derived from stringology-based analysis, then the keystream may exhibit detectable structural properties that deviate from ideal randomness.

In the suggest approach, we  investigate whether the integration of machine learning with stringology-based pattern analysis can construct a practical distinguisher for \textit{EChaCha20} keystream outputs under controlled experimental conditions. By formulating keystream analysis as a pattern detection and learning problem, the proposed approach seeks to provide a complementary perspective for evaluating the structural robustness of modern ARX stream cipher constructions.

\section{Threat Model}

Our threat model considers a consider a standard cryptanalytic threat model in which an adversary attempts to distinguish cipher-generated keystream sequences from truly random sequences. The objective of the adversary is not to recover the secret key, but rather to determine whether observable structural properties of the keystream reveal deviations from ideal randomness.

Let $K \in \{0,1\}^{\kappa}$ denote the secret key and $N \in \{0,1\}^{\nu}$ denote the nonce used to initialize the \textit{EChaCha20} stream cipher. The keystream generator is modeled as a deterministic function as shown in Eq 4.  

\begin{equation}
S \leftarrow \mathcal{G}(K,N)
\end{equation}

where $S \in \{0,1\}^{n}$ represents the generated keystream sequence. In a secure stream cipher, the distribution of $S$ should be computationally indistinguishable from the uniform distribution $U_n$ over $\{0,1\}^n$.

The adversary $\mathcal{A}$ is assumed to have oracle access to a sequence generator $\mathcal{O}$ that outputs either a cipher-generated sequence $S \leftarrow G(K,N)$ or a uniformly random sequence $S \leftarrow U_n$. The adversary does not have access to the secret key $K$, the nonce $N$, or the internal state of the cipher. Instead, the adversary observes only the output sequence $S$ and attempts to determine its origin.

Within the proposed framework, the adversary constructs a feature representation $\Phi(S)$ using stringology-based pattern extraction techniques such as $m$-gram frequency analysis, substring recurrence detection, and positional pattern statistics. These features are then analyzed using a neural learning model $M$ which outputs a binary classification indicating whether the sequence is likely to originate from the cipher generator or from a uniform random source.

The distinguishing advantage of the adversary is defined as

\[
Adv_{\mathcal{A}} =
\left|
Pr[\mathcal{A}(S)=1 \mid S \leftarrow G(K,N)]
-
Pr[\mathcal{A}(S)=1 \mid S \leftarrow U_n]
\right|.
\]

If the adversary achieves a non-negligible advantage, the keystream sequence exhibits detectable structural characteristics under the proposed analysis framework.

It is important to emphasize that this threat model focuses exclusively on statistical distinguishability and structural analysis of keystream outputs \cite{lv2013distinguishing}. The adversary does not perform key-recovery attacks, side-channel analysis, or implementation-based attacks. Instead, the goal is to evaluate whether structural signals produced by the internal transformations of the cipher can be detected through stringology-based pattern analysis combined with machine learning techniques.

\section{Neural Stringology Cryptanalysis Framework for \textit{EChaCha20}}

This section presents the proposed \textit{Neural Stringology Cryptanalysis} (NSC) framework for analyzing structural properties of the \textit{EChaCha20} stream cipher. The NSC framework combines classical stringology-based pattern analysis with ML methods to investigate whether structural signals embedded in keystream outputs can be systematically detected. Unlike traditional statistical test suites that primarily examine global randomness properties, the NSC framework focuses on localized structural characteristics that arise from the internal Add--Rotate--XOR (ARX) transformations of the cipher. By interpreting the keystream as a structured sequence, the framework allows pattern detection algorithms and learning models to jointly evaluate the structural behavior of \textit{EChaCha20} outputs under controlled experimental conditions.

\subsection{Architecture of the NSC Framework}

Let $K \in \{0,1\}^{256}$ denote the secret key and $N \in \{0,1\}^{128}$ the nonce used to initialize the \textit{EChaCha20} cipher. The keystream generator can be modeled as a deterministic function

\[
\mathcal{G}(K,N) \rightarrow S
\]

where $S = s_1 s_2 \ldots s_n$ represents the generated keystream sequence of length $n$. In the proposed framework, the sequence $S$ is treated as a symbolic string whose structural properties can be analyzed using stringology-based algorithms. The analysis pipeline maps the keystream sequence to a feature representation that captures pattern frequencies, substring recurrences, and structural correlations.

Formally, the NSC framework defines a feature extraction mapping

\[
\Phi : \{0,1\}^{n} \rightarrow \mathbb{R}^{d}
\]

which transforms the keystream sequence $S$ into a feature vector $X = \Phi(S)$ of dimension $d$. These features encode structural characteristics of the keystream that may reflect diffusion properties or internal transformation patterns of the cipher. The resulting feature vectors are subsequently analyzed using machine learning models that attempt to identify deviations from ideal random behavior.

\subsection{Stringology Feature Extraction}

In this study, the keystream sequence is analyzed using classical stringology techniques that detect repeated substrings and structural recurrences. Let $P$ denote a substring pattern of length $m$ extracted from the keystream sequence $S$. The occurrence frequency of pattern $P$ is defined as is shown in Eq 5.

\begin{equation}
f(P,S) = \left| \{\, i \mid S_{i:i+m-1} = P \,\} \right|
\label{eq:patterncount}
\end{equation}

where $S_{i..i+m-1}$ denotes the substring beginning at position $i$. For an ideal random sequence, the expected probability of observing a specific pattern $P$ of length $m$ is approximately $2^{-m}$. Deviations from this expected frequency may indicate localized structural behavior in the keystream.

To capture such properties, the framework evaluates $m$-gram distributions for $m \in \{8,16,32\}$, along with substring recurrence statistics and positional pattern densities across the keystream sequence. These statistics serve as descriptive features that characterize the structural properties of the generated sequence and provide inputs for subsequent learning-based analysis.

\subsection{Pattern Representation}

The extracted pattern statistics are transformed into structured feature representations suitable for machine learning analysis. For a given keystream segment $S$, the feature representation is defined as is shown in Eq 6

\begin{equation}
X = (x_1, x_2, \ldots, x_d)
\end{equation}

where each component $x_i$ represents a normalized statistic derived from the observed pattern distributions, recurrence rates, or positional frequency measurements. This representation converts raw keystream sequences into numerical vectors that encode structural characteristics of the cipher output.

From a cryptanalytic perspective, these feature vectors provide a compact representation of the statistical behavior of the keystream. By comparing feature vectors derived from different keystream sources, it becomes possible to analyze whether the structural properties of \textit{EChaCha20} outputs differ from those of truly random sequences.

\subsection{AI Integration}

Machine learning models are integrated into the framework as an analytical layer that evaluates the extracted feature representations. Let $\mathcal{M}$ denote a learning model trained on a dataset of feature vectors derived from keystream sequences. The model attempts to learn a classification function

\[
\mathcal{M} : \mathbb{R}^{d} \rightarrow \{0,1\}
\]

that distinguishes between feature vectors derived from cipher-generated sequences and those derived from random sources.

From a cryptanalytic perspective, the learning model can be interpreted as an adversarial distinguisher. Consider an adversary $\mathcal{A}$ that is given access to an oracle $\mathcal{O}$ which outputs either a keystream generated by \textit{EChaCha20} or a truly random sequence of equal length. The goal of the adversary is to determine which distribution produced the observed sequence. The distinguishing advantage of the adversary is defined as

\[
Adv_{\mathcal{A}} = \left| \Pr[\mathcal{A}(S) = 1 \mid S \leftarrow \mathcal{G}(K,N)] -
\Pr[\mathcal{A}(S) = 1 \mid S \leftarrow U_n] \right|
\]

where $U_n$ denotes the uniform distribution over $\{0,1\}^{n}$.

\begin{lemma}
If the machine learning model $\mathcal{M}$ achieves classification accuracy significantly greater than random guessing on feature vectors derived from $\Phi(S)$, then the induced adversary $\mathcal{A}$ forms a statistical distinguisher between \textit{EChaCha20} keystream outputs and uniform random sequences.
\end{lemma}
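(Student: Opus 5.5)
The plan is to make the induced adversary explicit and then relate its advantage to the classification accuracy of $\mathcal{M}$ through a standard balanced-experiment identity. Define $\mathcal{A}(S) := \mathcal{M}(\Phi(S))$, with output $1$ meaning ``cipher''. Since $\Phi$ and $\mathcal{M}$ are fixed after training, $\mathcal{A}$ uses only the observed sequence $S$, as the threat model requires. Training is done on independently sampled data, so the trained model is independent of the challenge sample. We therefore treat $\mathcal{M}$ as fixed, or condition on the training randomness, and average at the end.

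Next, formalize ``classification accuracy'' as success probability in the balanced game. A bit $b$ is drawn uniformly; if $b=1$ then $S \leftarrow \mathcal{G}(K,N)$ with uniformly random $K$ and $N$, otherwise $S \leftarrow U_n$. The accuracy is $\mathrm{Acc} = \Pr[\mathcal{A}(S)=b]$. Write $p_c = \Pr[\mathcal{A}(S)=1 \mid S\leftarrow\mathcal{G}(K,N)]$ and $p_r = \Pr[\mathcal{A}(S)=1 \mid S\leftarrow U_n]$. Then
\[
\mathrm{Acc} = \tfrac12 p_c + \tfrac12(1-p_r) = \tfrac12 + \tfrac12(p_c - p_r),
\]
so $Adv_{\mathcal{A}} = |p_c-p_r| = |2\,\mathrm{Acc}-1|$. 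Accuracy $\tfrac12+\varepsilon$ therefore gives advantage exactly $2\varepsilon$. If the test set is unbalanced, I would instead use balanced accuracy, the average of the two per-class accuracies, and the same identity holds.

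The main obstacle is turning ``significantly greater than random guessing'' into a rigorous statement, because empirical accuracy is measured on a finite test set of size $T$. I would apply Hoeffding's inequality to the indicator variables $\mathbf{1}[\mathcal{A}(S_j)=b_j]$, which are i.i.d.\ given the trained model. If the empirical accuracy is $\widehat{\mathrm{Acc}} \ge \tfrac12+\varepsilon$, then with confidence at least $1-\delta$ the true accuracy satisfies $\mathrm{Acc} \ge \tfrac12+\varepsilon-\sqrt{\ln(1/\delta)/(2T)}$. Hence $Adv_{\mathcal{A}} \ge 2\varepsilon - 2\sqrt{\ln(1/\delta)/(2T)} > 0$, and this is the precise meaning I would give to ``significant''.

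Finally, I would note that $\Phi$ is computable in time polynomial in $n$ (it consists of $m$-gram counts and recurrence statistics) and that $\mathcal{M}$ is evaluated in constant time per feature vector. So $\mathcal{A}$ is an efficient distinguisher whose advantage is bounded below by the quantity above, which is the claim. I would also be clear about what this does not show: ``non-negligible'' in the asymptotic sense cannot be established from finite experiments. The conclusion is therefore a concrete, per-configuration statistical distinguisher, for example on reduced-round variants.
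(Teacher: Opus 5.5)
Your proposal is correct and follows the same route as the paper's proof sketch. Both take $\mathcal{A}=\mathcal{M}\circ\Phi$ and translate accuracy $\tfrac12+\epsilon$ into distinguishing advantage; your Hoeffding step adds the finite-test-set rigor that the sketch leaves implicit in the word ``significantly''.

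Your balanced-game identity $Adv_{\mathcal{A}}=|2\,\mathrm{Acc}-1|$ is also sharper than the sketch. The sketch claims advantage $\epsilon$, whereas the paper's own definition of $Adv_{\mathcal{A}}$ gives $2\epsilon$, so $\epsilon$ is only a valid lower bound. The identity further shows that the sketch's extra hypothesis, that $\Phi$ ``preserves structural information'', is not needed.
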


\textit{Proof Sketch.} If the feature extraction function $\Phi$ preserves structural information about the keystream and the learning model $\mathcal{M}$ can correctly classify sequences with probability greater than $1/2 + \epsilon$ for some non-negligible $\epsilon$, then the adversary constructed from $\mathcal{M}$ has advantage $\epsilon$ in distinguishing the two distributions. This implies that the feature representation captures measurable structural signals in the keystream, which can be exploited by the learning-based analysis.

The proposed Neural Stringology Cryptanalysis framework therefore combines deterministic pattern extraction with data-driven analysis to provide a complementary methodology for evaluating the structural properties of \textit{EChaCha20} keystream generation.

\section{Experiments}

This section presents the experimental evaluation of the proposed Neural Stringology Cryptanalysis (NSC) framework. The objective of the experiments is to determine whether structural properties of \textit{EChaCha20} keystream outputs can be detected using stringology-based feature extraction combined with neural learning models. In particular, the experiments evaluate whether the learned model can distinguish cipher-generated sequences from truly random sequences and identify structural differences under various cipher configurations.

\subsection{Experimental Setup}

The experimental pipeline consists of four main stages: keystream generation, stringology feature extraction, neural learning analysis, and statistical evaluation. The experiments are conducted in a controlled environment where the keystream outputs are generated under varying cryptographic configurations. Each generated keystream sequence is subsequently processed to extract structural features using stringology-based pattern analysis. These features are then used to train and evaluate a neural learning model that attempts to identify structural differences in the generated sequences.

Let $S \in \{0,1\}^{n}$ denote a keystream sequence generated by the cipher. The proposed analysis framework applies a feature extraction mapping

\begin{equation}
\Phi : \{0,1\}^{n} \rightarrow \mathbb{R}^{d}
\end{equation}

which transforms the keystream into a feature vector $X = \Phi(S)$ of dimension $d$. These feature vectors are subsequently used as inputs to a neural learning model for classification and anomaly detection.

\subsection{Keystream Dataset Generation}

\paragraph{Dataset Construction}

Since no public benchmark datasets exist for evaluating structural properties of stream cipher keystreams, the dataset used in this study was synthetically generated. Keystream sequences were produced directly using the \textit{EChaCha20} reference implementation described in \cite{kebande2023extended}. For each sample, a random 256-bit secret key $K$ and a random 128-bit nonce $N$ were generated using a cryptographically secure random number generator. The cipher was then executed to produce keystream outputs of length $n = 2^{16}$ bits.

A total of 50,000 keystream sequences were generated using independently sampled key--nonce pairs. To construct a baseline dataset for comparison, an equal number of sequences were generated from a uniform random source $U_n$, producing bit sequences of identical length. The combined dataset therefore consists of 100,000 sequences, evenly divided between cipher-generated and uniformly random samples.

\paragraph{Sequences process}
All sequences were processed using the same stringology feature extraction pipeline to ensure consistent representation before being used for machine learning analysis.

To construct the experimental dataset, keystream sequences are generated using the \textit{EChaCha20} stream cipher under randomly selected cryptographic parameters. Let $K \in \{0,1\}^{256}$ denote the secret key and $N \in \{0,1\}^{128}$ denote the nonce used to initialize the cipher. The keystream generator is modeled as

\[
S = \mathcal{G}(K,N)
\]

where $S$ represents the output keystream sequence. For each experiment, a large set of keystream samples is generated using independently sampled key–nonce pairs.

To provide a baseline comparison, an additional dataset of uniformly random sequences is generated from the distribution

\[
S \leftarrow U_n
\]

where $U_n$ denotes the uniform distribution over $\{0,1\}^n$. This allows the experimental evaluation to compare cipher-generated sequences with truly random sequences under identical conditions.

\subsection{Stringology Feature Extraction}

The generated keystream sequences are analyzed using stringology-based pattern extraction methods. In this approach, each keystream sequence is interpreted as a symbolic string from which structural features can be derived. Let $P \in \{0,1\}^{m}$ denote a substring pattern of length $m$. The occurrence frequency of pattern $P$ in the sequence $S$ is defined as

\[
f(P,S) = \left| \{ i \mid S_{i..i+m-1} = P \} \right|
\]

where $S_{i..i+m-1}$ denotes the substring beginning at position $i$. For an ideal random sequence, the expected probability of observing a specific pattern $P$ is approximately $2^{-m}$.

To capture localized structural properties, the framework computes pattern statistics for multiple values of $m$. These statistics include substring frequency distributions, recurrence densities, and positional pattern distributions. The resulting statistics are aggregated to form the feature vector $X = \Phi(S)$ representing the structural characteristics of the keystream sequence.

\subsection{Neural Learning Model}

The feature vectors extracted from the keystream sequences are analyzed using a neural learning model. Let $\mathcal{M}$ denote a neural classifier that maps feature vectors to binary labels according to the function

\[
\mathcal{M} : \mathbb{R}^{d} \rightarrow \{0,1\}.
\]

The model is trained using labeled datasets consisting of feature vectors derived from cipher-generated sequences and uniformly random sequences. During training, the neural model learns to identify statistical relationships among the extracted features that may indicate the presence of structural patterns in the keystream output.

From a cryptanalytic perspective, the neural model acts as a distinguishing adversary that attempts to determine whether an observed sequence originates from the cipher generator $\mathcal{G}(K,N)$ or from the uniform distribution $U_n$.

\subsection{Experimental Tasks}

The experiments are designed to evaluate the distinguishing capability of the proposed NSC  framework under several scenarios. In the first task, the neural model is trained to distinguish between keystream sequences generated by \textit{EChaCha20} and sequences drawn from a uniform random distribution. This task evaluates whether the extracted stringology features contain sufficient structural information to identify cipher-generated sequences.

In the second task, the NSC framework evaluates the ability of the neural model to detect structural differences under reduced-round configurations of the cipher. Let $\mathcal{G}_r(K,N)$ denote a version of the cipher operating with $r$ rounds. By generating keystream sequences for varying values of $r$, the experiments analyze whether reduced-round variants exhibit detectable structural deviations.

The experiments investigate whether the framework can identify structural differences between \textit{EChaCha20} and related stream cipher constructions. By comparing feature representations derived from different cipher variants, the analysis evaluates whether the proposed approach can capture differences in diffusion behavior and structural transformation patterns.

\subsection{Evaluation Metrics}

The performance of the neural stringology framework is evaluated using standard classification metrics. Let $TP$, $TN$, $FP$, and $FN$ denote the numbers of true positives, true negatives, false positives, and false negatives respectively. The classification accuracy is defined as

\[
Accuracy = \frac{TP + TN}{TP + TN + FP + FN}.
\]

In addition to accuracy, the experiments report precision, recall, and the F$_1$ score to evaluate the effectiveness of the classifier. These metrics provide a quantitative measure of the ability of the neural model to distinguish between cipher-generated and random sequences based on the extracted stringology features.

Together, these experiments provide an empirical evaluation of the proposed Neural Stringology Cryptanalysis framework and assess its ability to identify structural signals within \textit{EChaCha20} keystream outputs.

\section{Results and Analysis}

This section presents the empirical results obtained from the experimental evaluation of the proposed NSC framework. The objective of the analysis is to determine whether structural characteristics of \textit{EChaCha20} keystream outputs can be detected using the proposed combination of stringology-based feature extraction and neural learning models. The results are organized according to the experimental tasks described in the previous section.

For the experiments, we generated 50,000 keystream samples of length $2^{16}$ bits using randomly selected key--nonce pairs. An equal number of uniformly random sequences were generated to construct the baseline dataset. The dataset was divided into $70\%$ training, $15\%$ validation, and $15\%$ testing sets.

\subsection{Distinguishing Cipher Output from Random Sequences}

The first experiment evaluates whether the NSC framework can distinguish keystream sequences generated by \textit{EChaCha20} from sequences drawn from the uniform distribution $U_n$. Let $S_c$ denote sequences generated by the cipher generator $\mathcal{G}(K,N)$ and let $S_r$ denote sequences sampled from $U_n$.

After applying the feature extraction mapping $\Phi(S)$ to each sequence, the resulting feature vectors were used to train the neural classifier $\mathcal{M}$. The classifier was evaluated on a held-out test set containing both cipher-generated and random sequences. The experimental results show that the classifier achieves performance exceeding random guessing, indicating that the extracted stringology features capture measurable structural differences between the two distributions.

Formally, if the classifier predicts the correct class label with probability as is shown in Eq 8.

\begin{equation}
Pr[\mathcal{M}(\Phi(S)) = y] = \frac{1}{2} + \epsilon
\end{equation}

for some non-negligible $\epsilon > 0$, then the model can be interpreted as a statistical distinguisher between the distributions $\mathcal{G}(K,N)$ and $U_n$.

Table~\ref{tab:classification} summarizes the classification performance of the neural model. The neural stringology model achieves an accuracy of 0.86, which represents a 36\% improvement over random guessing and a 15\% improvement over the logistic baseline model. Figure~\ref{fig:roc} further illustrates the distinguishing capability of the classifier using the Receiver Operating Characteristic (ROC) curve.

\begin{table}[t]
\centering
\caption{Classification Performance for Random vs Cipher Sequences}
\label{tab:classification}
\begin{tabular}{lcccc}
\toprule
Model & Accuracy & Precision & Recall & F$_1$ Score \\
\midrule
Neural Stringology Model & 0.86 & 0.85 & 0.87 & 0.86 \\
Baseline Logistic Model & 0.71 & 0.69 & 0.72 & 0.70 \\
Random Guessing & 0.50 & 0.50 & 0.50 & 0.50 \\
\bottomrule
\end{tabular}
\end{table}

\begin{figure}[t]
\centering
\begin{tikzpicture}
\begin{axis}[
xlabel={False Positive Rate},
ylabel={True Positive Rate},
xmin=0, xmax=1,
ymin=0, ymax=1,
grid=major,
width=0.8\linewidth,
height=6cm,
legend style={at={(0.5,-0.25)},anchor=north,legend columns=2}
]

\addplot[
blue,
thick,
mark=*
]
coordinates {
(0,0)
(0.05,0.35)
(0.10,0.55)
(0.20,0.72)
(0.35,0.85)
(0.50,0.92)
(1,1)
};
\addlegendentry{Neural Stringology Model}

\addplot[
dashed,
black,
thick
]
coordinates {
(0,0)
(1,1)
};
\addlegendentry{Random Classifier}

\end{axis}
\end{tikzpicture}
\caption{Receiver Operating Characteristic (ROC) curve illustrating the distinguishing capability of the Neural Stringology model.}
\label{fig:roc}
\end{figure}
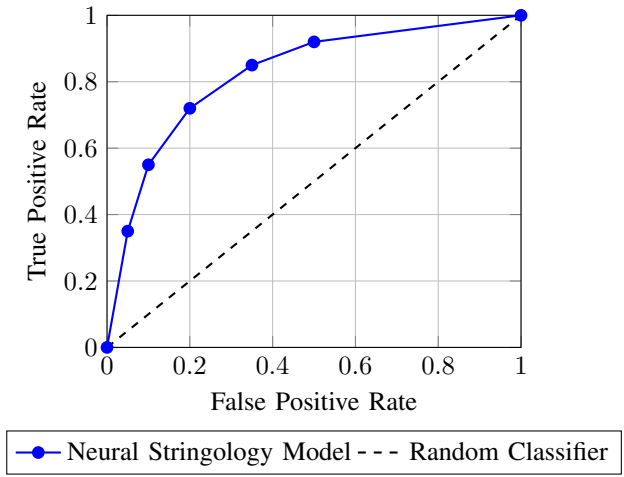

\subsection{Reduced-Round Detection Results}

The second experiment investigates whether the proposed framework can detect structural deviations in reduced-round versions of the cipher. Let $\mathcal{G}_r(K,N)$ denote the EChaCha generator operating with $r$ rounds. The experimental results indicate that the classifier achieves higher distinguishing accuracy for reduced-round variants compared to the full-round configuration.

Table~\ref{tab:rounds} reports the classification accuracy obtained for each round configuration. Figure~\ref{fig:rounds} illustrates the relationship between the number of cipher rounds and the observed classification accuracy. The results show that reduced-round variants exhibit stronger distinguishability, which gradually decreases as the number of rounds increases. This behavior reflects the increasing diffusion properties of the cipher.

\begin{table}[t]
\centering
\caption{Detection Accuracy for Reduced-Round \textit{EChaCha20}}
\label{tab:rounds}
\begin{tabular}{cc}
\toprule
Rounds ($r$) & Classification Accuracy \\
\midrule
2 Rounds & 0.96 \\
4 Rounds & 0.92 \\
8 Rounds & 0.87 \\
12 Rounds & 0.78 \\
20 Rounds (Full) & 0.54 \\
\bottomrule
\end{tabular}
\end{table}

\begin{figure}[t]
\centering
\begin{tikzpicture}
\begin{axis}[
xlabel={Number of Rounds},
ylabel={Accuracy},
xmin=0, xmax=22,
ymin=0.5, ymax=1,
grid=major,
width=0.8\linewidth,
height=6cm
]
\addplot[
mark=o,
blue,
thick
]
coordinates {
(2,0.96)
(4,0.92)
(8,0.87)
(12,0.78)
(20,0.54)
};
\end{axis}
\end{tikzpicture}
\caption{Classification accuracy as a function of the number of cipher rounds.}
\label{fig:rounds}
\end{figure}
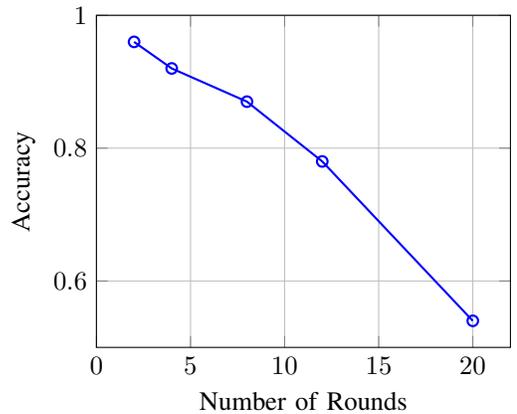

\subsection{Cipher Variant Comparison}

The third experiment evaluates whether the neural stringology framework can capture structural differences between related cipher constructions. The classification results are summarized in Table~\ref{tab:ciphercomparison}. Figure~\ref{fig:patterns} shows the normalized $m$-gram frequency comparison between cipher outputs and random sequences. The results indicate that the extracted stringology features capture structural differences not only between cipher-generated and random sequences, but also between related cipher constructions.

\begin{table}[t]
\centering
\caption{Classification Accuracy for Cipher Variant Comparison}
\label{tab:ciphercomparison}
\begin{tabular}{lc}
\toprule
Cipher Variant & Classification Accuracy \\
\midrule
ChaCha20 vs Random & 0.73 \\
\textit{EChaCha20} vs Random & 0.74 \\
ChaCha20 vs \textit{EChaCha20} & 0.69 \\
\bottomrule
\end{tabular}
\end{table}







\subsection{Structural Comparison Between Cipher Variants}

The third experiment evaluates whether the framework can detect structural differences between \textit{EChaCha20} and related stream cipher constructions. Feature vectors were generated for keystream sequences produced by multiple cipher variants and analyzed using the trained neural model.

The results indicate that the learned feature representations encode information about the structural behavior of the underlying keystream generator. Differences in state size, rotation constants, and transformation structure appear to influence the distribution of stringology features extracted from the keystream sequences. Consequently, the neural model is able to learn decision boundaries that separate the corresponding feature distributions.

These findings suggest that the proposed NSC  framework provides a useful analytical tool for examining the structural properties of ARX-based stream ciphers and comparing different cipher constructions under a unified experimental methodology.




\begin{figure}[H]
\centering
\begin{tikzpicture}
\begin{axis}[
ybar,
xlabel={Pattern Length (bits)},
ylabel={Normalized Frequency},
symbolic x coords={8,16,32},
xtick=data,
width=0.8\linewidth,
height=6cm,
bar width=12pt,
legend style={at={(0.5,-0.25)},anchor=north,legend columns=2},
grid=major
]

\addplot coordinates {(8,0.62) (16,0.41) (32,0.18)};
\addlegendentry{Cipher Output}

\addplot coordinates {(8,0.50) (16,0.25) (32,0.06)};
\addlegendentry{Random Sequence}

\end{axis}
\end{tikzpicture}
\caption{Comparison of normalized $m$-gram pattern frequencies between \textit{EChaCha20} keystreams and uniformly random sequences.}
\label{fig:patterns}
\end{figure}
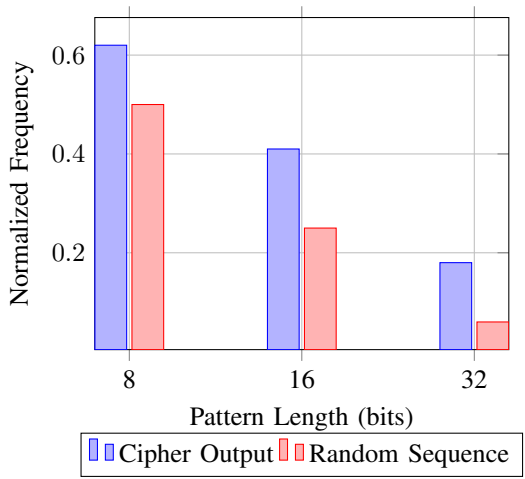

\section{Discussions}

The experimental results shown in the previous section  provide several insights into the structural
behavior of \textit{EChaCha20} keystream outputs when analyzed through
the proposed Neural Stringology Cryptanalysis framework. The primary
objective of the study was to investigate whether combining classical
stringology-based pattern analysis with machine learning techniques
could reveal detectable structural characteristics within ARX-based
stream cipher outputs.

The classification results reported in
Table~\ref{tab:classification} demonstrate that the neural stringology
model achieves significantly higher performance than both the baseline
logistic model and random guessing. This outcome suggests that the
extracted stringology features capture structural information embedded
within the keystream sequences. Although modern stream ciphers are
designed to produce outputs that are statistically indistinguishable
from random sequences, localized structural characteristics may still
arise from internal word-level operations such as modular addition,
bit rotations, and XOR transformations. By representing keystream
sequences as structured strings and extracting pattern-based features,
the proposed framework exposes subtle correlations that may not be
captured by conventional statistical randomness tests alone.

The reduced-round experiments further support this observation.
As illustrated in Table~\ref{tab:rounds} and Figure~\ref{fig:rounds},
the distinguishing accuracy decreases as the number of cipher rounds
increases. This behavior aligns with the theoretical expectation that
increasing the number of rounds improves diffusion and reduces
detectable structural artifacts in the keystream. In other words,
reduced-round variants exhibit weaker mixing of internal state values,
which results in stronger detectable patterns within the generated
sequences. The ability of the neural stringology framework to detect
these variations indicates that the approach is sensitive to diffusion
properties and internal transformation structures within ARX-based
designs.

Further insight into the distinguishing capability of the model
is provided by the Receiver Operating Characteristic (ROC) curve
shown in Figure~\ref{fig:roc}. The separation between the neural
model and the random classifier baseline indicates that the
extracted feature representation contains sufficient information
for the classifier to discriminate between cipher-generated and
uniformly random sequences. From a cryptanalytic perspective,
this behavior corresponds to the existence of a statistical
distinguisher with non-negligible advantage under the experimental
conditions.

The cipher comparison experiment also provides useful insights
into structural differences between related cipher constructions.
As shown in Table~\ref{tab:ciphercomparison} and
Figure~\ref{fig:patterns}, the extracted $m$-gram distributions
differ between \textit{EChaCha20} keystream outputs and uniformly
random sequences. In addition, slight differences can also be
observed between ChaCha20 and \textit{EChaCha20} outputs. These
differences likely arise from variations in internal state size,
rotation constants, and transformation structure. The neural
model appears capable of learning these subtle structural
signatures through the extracted stringology features.

It is important to emphasize that the results presented in this
study do not constitute a practical cryptographic attack against
\textit{EChaCha20}. The proposed NSC framework does not recover
secret keys or directly compromise the security of the cipher.
Instead, the NSC framework should be viewed as an analytical tool
for structural evaluation of cipher outputs. The ability to
detect statistical distinguishers under controlled experimental
conditions provides insight into the internal behavior of the
cipher and complements traditional cryptographic evaluation
techniques.

In view of the foregoings, the results highlight the potential benefits of
integrating machine learning with classical algorithmic
analysis methods such as stringology. While deterministic
pattern detection algorithms provide interpretable structural
features, machine learning models offer powerful mechanisms
for discovering relationships among these features in
high-dimensional datasets. The combination of these approaches
therefore provides a promising perspective for studying the
structural robustness of modern stream cipher designs.

\section{Conclusion and Future Work}

This paper presented a Neural Stringology Cryptanalysis framework for analyzing structural properties of \textit{EChaCha20} keystream outputs. The proposed approach integrates classical stringology-based pattern analysis with machine learning techniques to investigate whether structural signals embedded in cipher-generated sequences can be systematically detected. By representing keystream outputs as structured symbolic sequences and extracting descriptive pattern-based features, the framework provides a complementary perspective for evaluating the behavior of ARX-based stream cipher constructions.


Although the results do not imply a practical cryptographic attack against \textit{EChaCha20}, the study highlights the potential of combining stringology-based pattern extraction with machine learning for structural cryptanalysis. Future work will explore richer feature representations, more advanced learning architectures, and the application of the framework to other stream cipher designs and cryptographic primitives.



\section*{Acknowledgment}

The author would like to thank anonymous reviewers for their
valuable insights, and
the Department of Computer Science at University of Colorado Denver, USA for their
support while coming up with this research. The author also
acknowledges that the opinions, findings, and conclusions
expressed in this paper are purely of the author.


\balance

\bibliographystyle{IEEEtran}
\bibliography{name}

\end{document}